\documentclass[a4paper,11pt]{article}

\usepackage[margin=2.5cm]{geometry}
\usepackage{authblk} 
\usepackage[labelfont=bf,labelsep=period]{caption}
\usepackage[nodayofweek]{datetime}
\usepackage{enumitem}
\usepackage{thmtools}
\usepackage{thm-restate}
\usepackage{float}
\usepackage{graphicx}
\usepackage{hyperref}
\usepackage[utf8]{inputenc}
\usepackage{numbertabbing}
\usepackage{verbatim}
\usepackage{dsfont} 
\usepackage{url}
\usepackage{xcolor}
\usepackage{xspace}
\usepackage{tikz}
\usepackage{graphicx}
\usetikzlibrary{trees, positioning}
\usepackage{colortbl}
\usetikzlibrary{patterns}
\usepackage{amsmath} %%% better math %%%
\allowdisplaybreaks[2]          % but allow some broken displays
\usepackage{amssymb} %%% symbols %%%
\usepackage{amsthm} %%% Theorem environments with `amsthm' %%%
\usepackage[table]{xcolor}
\usepackage{array}

\newdateformat{simple}{\THEDAY\ \monthname[\THEMONTH]\ \THEYEAR}
\simple

\floatstyle{ruled}
\newfloat{algo}{htbp}{algo}
\floatname{algo}{Algorithm}

\theoremstyle{definition}
\newtheorem{definition}{Definition}[section]
\theoremstyle{lemma}
\newtheorem{lemma}{Lemma}[section]
\theoremstyle{theorem}

\theoremstyle{corollary}
\newtheorem{corollary}{Corollary}[section]
\theoremstyle{example}
\newtheorem{example}{Example}[section]

\def \ifempty#1{\def\temp{#1} \ifx\temp\empty }

\newcommand{\var}[1]{\textit{#1}}
\newcommand{\op}[1]{\textsl{#1}}

\newcommand{\CF}{\ensuremath{\mathcal{F}}\xspace}

\newcommand{\CP}{\ensuremath{\mathcal{P}}\xspace}
\newcommand{\CQ}{\ensuremath{\mathcal{Q}}\xspace}

\newcommand{\CS}{\ensuremath{\mathcal{S}}\xspace}

\newcommand{\BF}{\ensuremath{\mathbb{F}}\xspace}

\newcommand{\etal}{\emph{et al.}}

\definecolor{lightorange}{RGB}{255,220,180}
\definecolor{lightred}{RGB}{255, 180, 180}

\title{\bf Symmetry all the way down}

\author[1]{Ignacio Amores-Sesar}
\author[2]{Christian Cachin}
\author[3]{Simon Holmgaard Kamp}
\author[2]{Juan Villacis}
\affil[1]{Aarhus University, Denmark}
\affil[2]{University of Bern, Switzerland}
\affil[2]{Ruhr University Bochum, Germany}

\date{} 

\begin{document}

\maketitle
\begin{abstract}

Asymmetric trust generalizes classical symmetric quorum systems by allowing each process to specify its own failure assumptions. While this flexibility enables tolerance of strictly more failure scenarios, it is not known if, in these cases, it is actually possible to solve distributed tasks, and if so, which. 

We answer this question using the depth hierarchy for asymmetric trust (Amores-Sesar et al., OPODIS~'25), which characterizes how much a process must rely on others to solve a task. We prove that asymmetric trust does not increase the solvability of tasks requiring depth two or more, such as reliable broadcast or consensus. Specifically, for any Byzantine asymmetric quorum system, every failure scenario that permits solving a task requiring depth at least two can also be tolerated by a suitably constructed Byzantine symmetric quorum system. We show this via a compiler that transforms asymmetric quorum systems into symmetric ones. 

The additional failure patterns tolerated exclusively by asymmetric trust correspond  to scenarios in which only simpler tasks requiring depth one or less (such as consistent broadcast) can be solved. We further prove that this result is tight in the depth hierarchy, meaning that there exist no compilers that produce symmetric quorum systems that are valid also in failure scenarios where correct processes have depths one or less. 

Our results clarify the precise power of asymmetric trust. While it strictly enlarges the set of tolerable failure patterns, it does not provide additional strength for solving tasks requiring depth two or higher.

% mention what can be solved with a symmetric qs that satisfies consistency and availability

\end{abstract}

%\tableofcontents

\section{Introduction}

Solving problems in distributed systems requires making explicit assumptions about the types and patterns of failures that may occur. Most classical systems rely on global adversarial thresholds, which assume that at most $f$ out of $n$ processes may be faulty at any given time. A more general abstraction is provided by symmetric quorum systems~\cite{DBLP:journals/dc/MalkhiR98}. Instead of bounding the number of faulty processes, quorum systems characterize the sets of processes that may concurrently fail, often based on shared properties such as geographic location, operating system, or administrative domain. Under these models, all participants adopt the same trust assumptions. Accordingly, these are referred to as \emph{symmetric trust assumptions}, and protocols built upon them as \emph{symmetric protocols}.

However, the symmetric model does not capture the heterogeneous and subjective trust relationships that occur in the real world. To address this limitation, \emph{asymmetric trust} was introduced~\cite{DBLP:conf/asiacrypt/DamgardDFN07}. In this setting, each process operates under its own trust assumptions, formalized through an asymmetric quorum system. Processes may define their quorums independently, reflecting individual views of which failures are expected. Protocols designed for this environment are known as asymmetric protocols. 

Another benefit provided by asymmetric trust is that, in fully permissionless environments~\cite{DBLP:journals/corr/abs-2304-14701}, it provides an alternative to traditional consensus protocols~\cite{nakamoto2008bitcoin, DBLP:conf/podc/PassS17} that mitigate Sybil attacks by requiring external resources, such as computational power or capital. A prominent example is the Stellar Consensus Protocol (SCP)~\cite{mazieres2015stellar}, which relies exclusively on local, subjective trust decisions, from which global asymmetric quorums emerge organically. By decoupling security from physical or economic resources, asymmetric trust enables consensus in permissionless networks through a completely different approach, based solely on trust relationships, which more closely resembles traditional permissioned BFT algorithms. This gives rise to more efficient and fast protocols.  

Asymmetric trust has been extensively studied~\cite{DBLP:conf/asiacrypt/DamgardDFN07, mazieres2015stellar, DBLP:conf/wdag/LosaGM19, DBLP:conf/opodis/SheffWRM20, DBLP:journals/dc/AlposCTZ24, DBLP:conf/podc/Amores-SesarCVZ25, DBLP:conf/opodis/Amores-SesarCKV25, DBLP:conf/wdag/LiCL23} and has been deployed in systems like the XRP Ledger and Stellar.

The flexibility of asymmetric trust comes at a cost. Algorithms for this model are typically more complex and difficult to design than their symmetric counterparts. In contrast, symmetric protocols have been studied for decades, and efficient solutions are known for most fundamental tasks.

Correct processes in asymmetric trust are characterized using the depth hierarchy~\cite{DBLP:conf/opodis/Amores-SesarCKV25}, a measure of the quality of their trust assumptions with respect to a failure scenario and which generalizes the binary distinction between faulty and correct processes in symmetric systems. Task solvability in the asymmetric setting is tightly connected to depth~\cite{DBLP:conf/opodis/Amores-SesarCKV25}. Given a failure scenario, for processes with depth zero no tasks can be solved. For processes with depth one, only simple tasks, such as consistent broadcast, are solvable, since this category restricts how much information can be exchanged in the protocol. In contrast, all more complex tasks can only be solved for processes with higher depths of at least two. This includes in particular reliable broadcast or consensus that form the basis of many large-scale distributed systems. The hierarchy that highlights this relationship is a key finding of this work~(Table~\ref{tab:depth-table}).

One of the main advantages of asymmetric trust is its ability to tolerate strictly more failure patterns than symmetric trust~\cite{DBLP:journals/dc/AlposCTZ24}. In the latter, if the actual failure pattern violates the global assumption (for example, having more than $f$ faults with a threshold failure assumption), correctness may fail for all processes. In asymmetric trust, the impact of such misguided trust assumptions is mostly local to the process that made them, but can still affect other processes. The relationship between this additional fault tolerance and task solvability is poorly understood. In particular, it is unclear if it is actually possible to solve problems in the extra fault tolerance cases, and if so, which tasks can be solved. In this paper, we study this exact relationship.

\paragraph*{Contribution.} We show that for every asymmetric quorum system, all failure scenarios that leave at least one process with depth two can also be tolerated by a suitably constructed symmetric quorum system. The latter is derived from the individual quorum systems of all processes. In the remaining scenarios tolerated only by the asymmetric system, all correct processes have depth at most one, making it impossible to solve tasks that require depth two or more. Thus, for such tasks, every solvable failure scenario under asymmetric trust is also solvable under symmetric trust, enabling the use of simpler symmetric algorithms. Our proof is constructive (Figure~\ref{fig:asym-fault-tolerance}): we present a compiler that transforms any asymmetric quorum system into a symmetric one with identical fault tolerance for all scenarios containing a process of depth at least two. The resulting quorum system can be efficiently used given the trust assumptions. This improves the best previously existing compiler, which achieved this only for processes with infinite depth~\cite{DBLP:conf/opodis/Amores-SesarCKV25}. We also show that no compiler exists that achieves the same for all scenarios with a process with depth one. 

Overall, although asymmetric trust tolerates more failure patterns, this extra tolerance does not increase solvability for tasks requiring depth two or more. For primitives such as reliable broadcast and consensus, symmetric trust is therefore sufficient.

\paragraph*{Organization.} The remainder of this paper is organized as follows. Section~\ref{sec:preliminaries} introduces the model and reviews the key concepts underlying our approach. Section~\ref{sec:asymmetric} formalizes the asymmetric trust model. Sections~\ref{sec:tolerance-asymmetric} presents our main result, the study of the relationship between fault tolerance and task solvability in asymmetric trust. Sections~\ref{sec:impact-known}~and~\ref{sec:impact-unknown} analyze our results in the two settings asymmetric trust. Section~\ref{sec:relatedwork} surveys related work on asymmetric trust. Finally, Section~\ref{sec:conclusions} presents the conclusions and points to different avenues for future work. 

\section{Preliminaries}
\label{sec:preliminaries}
\subsection{Model}
\label{sec:model}
All algorithms proposed in this paper operate in the asynchronous unauthenticated setting. We consider a system of $n$ \emph{processes} $\mathcal{P}=\{p_1, \dots, p_n\}$ that interact by exchanging messages. A protocol for $\mathcal{P}$ consists of a collection of programs, one for each process. We describe protocols using the event-based notation of Cachin~\etal~\cite{DBLP:books/daglib/0025983}.

A process that follows its protocol throughout an execution is called \emph{correct}. A \emph{faulty} process, also referred to as \emph{Byzantine}, may crash or deviate arbitrarily from its specification. We assume authenticated channels for message exchange between processes. For a system $\mathcal{A} \subseteq 2^{\mathcal{P}}$, the notation $\mathcal{A}^*$ denotes the collection of all subsets of the sets in $\mathcal{A}$, that is, $\mathcal{A}^* = \{A' \mid A' \subseteq A,\ A \in \mathcal{A}\}.$

\begin{figure}
\centering
\begin{tikzpicture}

% Outer ellipse: Asymmetric quorum system
\fill[red!10] 
  (0,0) ellipse (3.6cm and 1.9cm);

\fill[pattern=north east lines, pattern color=red!60]
  (0,0) ellipse (3.6cm and 1.9cm);

\draw[red!70!black, thick]
  (0,0) ellipse (3.6cm and 1.9cm);

% Inner ellipse: Symmetric quorum system
\fill[green!15]
  (0,-0.6) ellipse (2.8cm and 1cm);

\draw[green!60!black, thick]
  (0,-0.6) ellipse (2.8cm and 1cm);

% Labels
\node[align=center] at (0,0.9) 
  {Failure scenarios tolerated by\\
   asymmetric quorum system $\mathbb{Q}$};

\node[align=center] at (0,-0.6) 
  {Failure scenarios tolerated by\\
   symmetric quorum system $\mathcal{S}$};

% Comment on the side
\node[align=left] (comment) at (5.5,1.7)
  {\emph{Only tasks requiring depth 1}\\
   \emph{can be solved here}};

\draw[->, thick] (comment.west) -- (2.7,0.2);

\end{tikzpicture}
\caption{The  set of failure scenarios tolerated by any asymmetric quorum system $\mathbb{Q}$ (in striped red) is strictly larger than that which can be tolerated by the symmetric quorum system $\mathcal{S}=\mathcal{C}(\mathbb{Q})$ derived by the compiler from $\mathbb{Q}$ (in green). However, in the failure patterns tolerated by $\mathbb{Q}$ but not $\mathcal{S}$, only tasks requiring depth one can be solved, this excludes problems like reliable broadcast or consensus.}
\label{fig:asym-fault-tolerance}
\end{figure}

\subsection{Symmetric trust overview}
\label{sec:symmetric-trust}
A symmetric fail-prone system is defined as a family of sets $\CF \subseteq 2^{\CP}$, where each $F \in \CF$ with $F \subseteq \CP$ is called a \emph{fail-prone set} and contains all processes that may at most fail together in some execution~\cite{DBLP:journals/dc/MalkhiR98}. They represent the assumption on the possible failure patterns that can occur and, as such, specify all sets of faulty processes that must be tolerated by a protocol. A protocol designed for \CF achieves its properties as long as the set $F$ of actually faulty processes satisfies $F \in \CF^*$. Symmetric quorum systems, which are specified with respect to a fail-prone system, are specified in Definition~\ref{def:quorum}.

\begin{definition}[(Symmetric) Byzantine quorum system~\cite{DBLP:journals/dc/MalkhiR98}]\label{def:quorum}
  A \emph{(symmetric) Byzantine quorum system} for \CF is a collection of sets of
  processes $\CQ \subseteq 2^{\CP}$ where no set is contained in
  another and each $Q \in \CQ$ is called a
  \emph{quorum}, such the following properties hold:
  \begin{description}
  \item[Consistency:] The intersection of any two quorums contains at least
    one process that is not faulty, i.e.,
    \[\forall Q_1, Q_2 \in \CQ , \forall F \in \CF: \, Q_1 \cap Q_2 \not
    \subseteq F.\]
\item[Availability:] For any set of processes that may fail together, there
  exists a disjoint quorum in \CQ, i.e.,
  \[\forall F \in \CF: \, \exists Q \in \CQ: \, F \cap Q = \emptyset.\]
  \end{description}
\end{definition}

This is a generalization of threshold failure assumptions for Byzantine faults~\cite{DBLP:journals/jacm/PeaseSL80}. In the same way in which many tasks can only be solved if $n > 3f$ in the threshold case, Byzantine quorum systems also require a bound on the faulty processes. This is captured through the $Q^3$ condition. 

\begin{definition}[$Q^3$-condition~\cite{DBLP:journals/dc/MalkhiR98,DBLP:journals/joc/HirtM00}]
  A fail-prone system \CF satisfies the \emph{$Q^3$-condition}, abbreviated
  as $Q^3(\CF)$, whenever it holds
  \[
    \forall F_1, F_2, F_3 \in \CF: \, \CP \not\subseteq F_1 \cup F_2 \cup F_3.
  \]
\end{definition}

The \emph{bijective complement} of a process set $\CS \subseteq 2^{\CP}$ is defined as $\overline{\CS} = \{ \CP \setminus S | S \in \CS \}$. The quorum system $\CQ = \overline{\CF}$ is called the \emph{canonical quorum system} of~\CF. Lemma~\ref{lem:canon} shows that a Byzantine quorum system exists if and only if the fail-prone system satisfies the $Q^3$ condition. 

\begin{lemma}[Byzantine quorum system existence~\cite{DBLP:journals/joc/HirtM00}]\label{lem:canon}
  Given a fail-prone system \CF, a Byzantine quorum system for \CF exists
  if and only if~$Q^3(\CF)$.
  In particular, if $Q^3(\CF)$ holds, then $\overline{\CF}$, the canonical quorum system associated to \CF, is a Byzantine quorum system.
\end{lemma}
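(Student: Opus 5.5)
We prove the two directions separately and reduce everything to the canonical quorum system $\overline{\CF}$. Throughout, $\CF$ is closed downward implicitly: the conditions only refer to the maximal fail-prone sets, and any $F \subseteq F' \in \CF$ is handled by $F'$.

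\emph{Sufficiency (and the ``in particular'' claim).} Assume $Q^3(\CF)$ and set $\CQ=\overline{\CF}$. If the definition requires that no quorum contains another, we first restrict $\CF$ to its maximal elements; this keeps both the $Q^3$ condition and the set of tolerated failure patterns $\CF^*$ unchanged.

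For \emph{Availability}, given $F\in\CF$ we take $Q=\CP\setminus F\in\CQ$, which is disjoint from $F$ by construction.

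For \emph{Consistency}, take $Q_1=\CP\setminus F_1$, $Q_2=\CP\setminus F_2$ and any $F_3\in\CF$. Then
\[
Q_1\cap Q_2=\CP\setminus(F_1\cup F_2).
\]
Suppose $Q_1\cap Q_2\subseteq F_3$. Then $\CP\subseteq F_1\cup F_2\cup F_3$, which contradicts $Q^3(\CF)$. Hence $Q_1\cap Q_2\not\subseteq F_3$.

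\emph{Necessity.} Let $\CQ$ be any Byzantine quorum system for $\CF$, and suppose for contradiction that some $F_1,F_2,F_3\in\CF$ satisfy $\CP= F_1\cup F_2\cup F_3$. By Availability, there are quorums $Q_1,Q_2\in\CQ$ with $Q_1\cap F_1=\emptyset$ and $Q_2\cap F_2=\emptyset$. Then
\[
Q_1\cap Q_2\subseteq \CP\setminus(F_1\cup F_2)\subseteq F_3.
\]
This violates Consistency for the triple $Q_1,Q_2,F_3$. Therefore $Q^3(\CF)$ must hold.

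The only delicate point is the bookkeeping around the definition's requirement that no quorum is contained in another. Passing to the maximal fail-prone sets settles it: complements of an antichain form an antichain. Beyond that, each direction is a one-line set identity.
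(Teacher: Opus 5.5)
Your proof is correct. The paper gives no proof of its own (it cites Hirt and Maurer, building on Malkhi and Reiter), and your argument is the standard one from that literature: complements of fail-prone sets give sufficiency, and for necessity the two availability quorums for $F_1$ and $F_2$ have their intersection forced inside $F_3$. Passing to maximal fail-prone sets is also the right way to handle the antichain clause, since the ``in particular'' claim implicitly assumes $\CF$ lists only maximal sets.
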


\section{Asymmetric trust}
\label{sec:asymmetric}
We  consider the asymmetric trust model proposed by Alpos~\etal~\cite{DBLP:journals/dc/AlposCTZ24}.
We refer the reader to their paper for a full presentation. In protocols with asymmetric trust, each participant is free to make its own individual trust assumptions about others, captured by an asymmetric quorum system. Given a set of processes $\mathcal{P}$, an \emph{asymmetric fail-prone system} $\mathbb{F} = [\mathcal{F}_1, \dots, \mathcal{F}_n]$, where $\mathcal{F}_i$ represents the failure assumptions  of process $p_i$, captures the heterogeneous model. Each $\mathcal{F}_i$ is a collection of subsets of $\mathcal{P}$ such that some $F \in \mathcal{F}_i$ with $F \subseteq \mathcal{P}$ is called a fail-prone set for $p_i$ and contains all processes that, according to $p_i$, may at most fail together in some execution \cite{DBLP:conf/asiacrypt/DamgardDFN07}. We can, in turn, proceed to define asymmetric Byzantine quorum systems, denoted by $\mathbb{Q}$. 
\begin{definition}
\label{def:abqs}
  An \emph{asymmetric Byzantine quorum system} $\mathbb{Q}$ for $\mathds{F}$ is an array of collections of sets $\mathbb{Q} = [\mathcal{Q}_1, \cdots, \mathcal{Q}_n]$ where $\mathcal{Q}_i \subseteq 2^{\mathcal{P}}$ for $i \in [1, n]$. The set $\mathcal{Q}_i \subseteq 2^{\mathcal{P}}$ is a symmetric quorum system of $p_i$ and any set $Q_i \in \mathcal{Q}_i$ is called a quorum for $p_i$. The system $\mathbb{Q}$ must satisfy the following two properties.

\begin{description}
    \item[Consistency:] The intersection of two quorums for any two processes contains at least one process for which either process assumes that it is not faulty, i.e.,
    $$
    \forall i, j \in [1, n], \forall Q_i \in \mathcal{Q}_i, \forall Q_j \in \mathcal{Q}_j, \forall F_{ij} \in \mathcal{F}_i^* \cap \mathcal{F}_j^*: Q_i \cap Q_j \nsubseteq F_{ij}.
    $$

    \item[Availability:] For any process $p_i$ and any set of processes that may fail together according to $p_i$, there exists a disjoint quorum for $p_i$ in $\mathcal{Q}_i$, i.e.,
    $$
        \forall i \in [1, n], \forall F_i \in \mathcal{F}_i : \exists Q_i \in \mathcal{Q}_i : F_i \cap Q_i = \emptyset.
    $$
\end{description}    
\end{definition}

A kernel is a set of processes that intersects all quorums of a process. That is, for a process $p_i$ with quorum system $\mathcal{Q}_i$, $K$ is a kernel when $K \cap Q \neq \emptyset , \forall Q \in \mathcal{Q}_i$. We denote by $\mathcal{K}_i$ the set containing all kernels of a process $p_i$. 

Given an asymmetric fail-prone system $\mathbb{F}$, there will exist a valid asymmetric quorum system for $\mathbb{F}$ if and only if $\mathbb{F}$ satisfies the $B^3$ condition~\cite{DBLP:conf/asiacrypt/DamgardDFN07, DBLP:journals/dc/AlposCTZ24}. This property is defined as follows.
\begin{definition}[$B^3$-condition]
\label{def:b3}
  An asymmetric fail-prone system \BF satisfies the
  \emph{$B^3$-condition}, abbreviated as $B^3(\BF)$, whenever it holds that
  \[
    \forall i,j \in [1,n],
    \forall F_i \in \CF_i, \forall F_j\in\CF_j,
    \forall F_{ij} \in {\CF_i}^*\cap{\CF_j}^*: \,
    \CP \not\subseteq F_i \cup F_j \cup F_{ij} 
  \]
\end{definition}

If $B^3(\mathbb{F})$ holds, then the canonical quorum system, defined as the complement of the asymmetric fail-prone system, is a valid asymmetric quorum system. Throughout this work, we assume that all quorum systems are canonical, which simplifies the analysis; however, our results also extend to non-canonical systems. Accordingly, when we say that a quorum system satisfies the $B^3$ or $Q^3$ property, we mean that its associated canonical fail-prone system satisfies the property. 

In asymmetric systems, processes are still classified as correct or faulty, but correctness admits multiple levels depending on the accuracy of a process’s trust assumptions in a given execution. Let $F$ denote the set of faulty processes. This set is unknown to the processes and can be identified only by an external observer or the adversary. A process $p_i$ is said to correctly foresee $F$ if $F \in \mathcal{F}_i^*$, that is, if $F$ is contained in one of its fail-prone sets. Unlike in the symmetric setting, asymmetric systems may include processes that correctly foresee $F$ alongside others whose trust assumptions are incorrect and fail to capture the actual failures. This heterogeneity is captured by the notion of depth~\cite{DBLP:conf/opodis/Amores-SesarCKV25}. In contrast, symmetric systems implicitly assume that $F$ is always foreseen by the common quorum system. The depth of a process reflects its level of correctness and determines how much it can rely on other processes. We formalize this notion in Definition~\ref{def:depth}.

\begin{definition}[Depth of a process]
\label{def:depth}
    For an execution with faulty processes $F$, we recursively define the notion of a correct process having depth $d$ as follows
    \begin{itemize}

        \item Any correct process $p_i$ has depth 0. 
    
        \item Additionally, a correct process $p_i$ has depth $d\geq 1$ if it has a quorum such that all processes contained in it have depth at least $d-1$, i.e.,
        $$
        \exists Q \in \mathcal{Q}_i, \forall p_j \in Q: \text{ $p_j$ is correct, has depth $s$, and $s\geq d-1$}
        $$
    \end{itemize}
\end{definition}

A process with maximal depth $d$ also has depth $d'$ for all $0\leq d'\leq d$. We focus on the \emph{maximal depth} of a process. Processes with depth 0, although correct, cannot rely on any of their quorums, processes with depth $\infty$ can have absolute reliance on other processes. Note that in symmetric systems all processes have either depth $\infty$ (when trust assumptions hold) or 0 (otherwise). 

\subsection{Quorum knowledge}

Among the many works that study asymmetric trust, some~\cite{DBLP:journals/dc/AlposCTZ24, DBLP:conf/opodis/Amores-SesarCKV25, DBLP:conf/opodis/SheffWRM20} assume that each participant’s trust choices are globally known, while others~\cite{mazieres2015stellar, DBLP:conf/wdag/LosaGM19, DBLP:conf/wdag/LiCL23} assume that these choices are known only locally. Despite their prevalence, these settings have not been formally distinguished, and limited attention has been paid to their impact on solvability and algorithm design. We therefore distinguish between the known quorums and unknown quorums settings.

In the known quorums setting, each process knows the trust assumptions of every other process, as formalized in Definition~\ref{def:known-quorums}. This setting can be conceptualized via a trusted third party that collects and distributes quorum systems. Upon joining, each process $p_i$ submits its quorum system $\mathcal{Q}_i$ to this party, which verifies the $B^3$ property, combines the systems into a global quorum system $\mathbb{Q}$, and distributes it to all processes. The consensus protocols of Alpos~\etal~\cite{DBLP:journals/dc/AlposCTZ24} and Amores-Sesar~\etal~\cite{DBLP:conf/opodis/Amores-SesarCKV25, DBLP:conf/podc/Amores-SesarCVZ25} are designed for this setting. This setting applies in the permissioned, quasi-permissionless, and dynamically available models in the hierarchy of Lewis-Pye~and~Roughgarden~\cite{DBLP:journals/corr/abs-2304-14701}.

\begin{definition}[Known quorums setting]
    \label{def:known-quorums}
    Given an asymmetric quorum system $\mathbb{Q}$ and a set of processes $\mathcal{P}$, in the \textit{known-quorums} setting, each process $p_i \in \mathcal{P}$ has knowledge of $\mathbb{Q}$.
\end{definition}

In the unknown quorums setting, each process knows only its own quorum system $\mathcal{Q}_i$, as formalized in Definition~\ref{def:unknown-quorums}. This setting raises fundamental challenges for solvability. In particular, many problems require quorum systems to be pairwise intersecting, yet in this setting processes have no way of verifying whether this condition holds. A common approach in the literature~\cite{mazieres2015stellar, DBLP:conf/wdag/LosaGM19} is to assume pairwise intersection without providing guarantees when this assumption is violated. The reliable broadcast protocols of Alpos~\etal~\cite{DBLP:journals/dc/AlposCTZ24} and Amores-Sesar~\etal~\cite{DBLP:conf/opodis/Amores-SesarCKV25}, as well as the consensus protocol of Mazières~\cite{mazieres2015stellar}, exemplify this approach. Although techniques developed for the unknown quorums setting also apply to the known quorums setting, solutions tailored to the latter often achieve stronger guarantees and better efficiency. This setting applies in the fully permissionless model in the hierarchy of Lewis-Pye~and~Roughgarden~\cite{DBLP:journals/corr/abs-2304-14701}.

\begin{definition}[Unknown quorums setting]
    \label{def:unknown-quorums}
    Given an asymmetric quorum system $\mathbb{Q}$ and a set of processes $\mathcal{P}$, in the \textit{unknown-quorums} setting, each process $p_i \in \mathcal{P}$ only has knowledge of its own quorum system $\mathcal{Q}_i \in \mathbb{Q}$.
\end{definition}

\subsection{Characterizing tasks through depth}
\label{sec:depth-tasks}

Problems in distributed computing are typically characterized by abstractions that specify the safety and liveness properties required to hold in an execution~\cite{DBLP:books/daglib/0025983}. For example, a consensus protocol must satisfy agreement, validity, and termination. A protocol solves a task if every admissible execution of the protocol satisfies the task specification. If such a protocol exists, it is said that the task is solvable. In the symmetric setting, little attention is paid to which processes must satisfy these properties, as it is generally assumed that all correct processes do so. This is reasonable, since correct processes are not meaningfully distinguished from one another in this setting. This changes under asymmetric trust, where correct processes are no longer identical. Instead one can distinguish between naive and wise correct processes~\cite{DBLP:journals/dc/AlposCTZ24} and more generally organize them into a depth hierarchy~\cite{DBLP:conf/opodis/Amores-SesarCKV25}, based on the strength of their trust assumptions with respect to a given failure pattern. Their ability to satisfy task properties depends directly on their depth. As such, properties in asymmetric tasks incorporate depth as a parameter for the processes. As an example, consider the asymmetric consistent broadcast task presented in Definition~\ref{def:consistent-broadcast}.

\begin{definition}[Depth-characterized asymmetric Byzantine consistent broadcast]\label{def:consistent-broadcast}
  A protocol for \emph{asymmetric Byzantine consistent broadcast} with sender~$p_s$ that solves the task for depth $d$, shortened as \op{CB[$d$]}, defined through the events \op{dac-broadcast}$(m)$ and \op{dac-deliver}$(m)$ satisfies the following properties:
    \begin{itemize}
        \item  \textbf{Validity: } If a correct process $p_s$ \op{dac-broadcasts} a message $m$ then eventually all processes with depth $\var{d}$ \op{dac-deliver} $m$.
        \item \textbf{Consistency: }   If a process with depth $\var{d}$ \op{dac-delivers} $m$ and another process with depth $\var{d}$ \op{dac-delivers} $m'$, then $m=m'$.
        \item \textbf{Integrity: }  Every process with depth $\var{d}$ \op{dar-delivers} $m$ at most once. %Moreover, if $p_s$ is correct and the receiver has depth $sd$, then $m$ was previously broadcast by $p_s$
    \end{itemize}
\end{definition}

A protocol is said to solve a task with depth $d$ if, for all executions and failure scenarios having at least one process with depth $d$,  all processes having depth $d$ satisfy the task properties at the end of the execution. If such a protocol exists, one can also say that the task is solvable with depth $d$. This raises the question of the minimum depth for which there exists a protocol that solves the task. We call this the minimum depth required to solve a task. Finding such depths for different tasks is an open research problem. 

For consensus and reliable broadcast, the minimum known depths for which there are protocols that solve the problem are ten and three respectively~\cite{DBLP:conf/opodis/Amores-SesarCKV25}. Decreasing such depths to the lowest values possible is an open research question. It has been shown that there exist no protocols that solve such tasks for depth one~\cite{DBLP:conf/opodis/Amores-SesarCKV25}. Therefore, the minimum required depth for such problems is at least two, but it could be higher. Similarly, it has been proven that there are no tasks that can be solved for depth 0, as a consequence, all asymmetric tasks require depth at least one~\cite{DBLP:conf/opodis/Amores-SesarCKV25}. At depth one, only simple tasks requiring minimal inter-process communication can be solved. Consistent broadcast is the main example of such a task. Note that if a task requires depth $d$, but a failure scenario only leaves processes having depths $d'<d$, the task is said to not be solvable for such scenario, as no process is guaranteed to satisfy all the properties required by the task definition. Table~\ref{tab:depth-table} summarizes the current characterization of tasks by minimum depths for which a protocol exists, in both quorum knowledge settings.

\begin{table}[ht]
\centering
\begin{tabular}{|c|c|c|}
\hline
\textbf{Depth} & \textbf{Known quorums setting} & \textbf{Unnown quorums setting} \\ \hline
$\infty$ &  & Partially-synchronous consensus~\cite{mazieres2015stellar} \\ \hline % Best known quorum compiler~\cite{DBLP:conf/opodis/Amores-SesarCKV25}
$\vdots$ &  & \\ \hline
10 & Asynchronous consensus~\cite{DBLP:conf/opodis/Amores-SesarCKV25}  & \\ \hline
$\vdots$ &  & \\ \hline
3 & Reliable broadcast~\cite{DBLP:conf/opodis/Amores-SesarCKV25}  & Reliable broadcast~\cite{DBLP:conf/opodis/Amores-SesarCKV25} \\ \hline
2 &   & \\ \hline % Our quorum compiler
1 \cellcolor{lightorange}&  Consistent broadcast~\cite{DBLP:journals/dc/AlposCTZ24} \cellcolor{lightorange} & Consistent broadcast~\cite{DBLP:journals/dc/AlposCTZ24} \cellcolor{lightorange}\\ \hline
0 \cellcolor{lightred} & \cellcolor{lightred} & \cellcolor{lightred} \\ \hline
\end{tabular}
\caption{Characterization of tasks by lowest  depth for which there is a protocol that solves it, in both settings of asymmetric trust. The red highlighted depth (0) shows the space where no tasks are solvable, the orange highlighted depth (1) shows the space  where only simple tasks, like consistent broadcast, can be solved. Many core distributed tasks such as reliable broadcast and consensus require depth at least two.}\vspace*{-\baselineskip}
\label{tab:depth-table}
\end{table}

\section{On the fault tolerance of asymmetric trust}
\label{sec:tolerance-asymmetric}
One of the main advantages of asymmetric trust is its greater fault tolerance compared to symmetric trust~\cite{DBLP:conf/asiacrypt/DamgardDFN07}. Its complex and heterogeneous structure enables the tolerance of failure patterns that cannot be handled by standard symmetric quorum systems. Consider the asymmetric quorum system presented in Example~\ref{ex:1}. This trust structure tolerates a wider range of failure patterns than any symmetric quorum system could handle (since the $Q^3$ property is not satisfied). 

In particular, this is achieved by having heterogeneous notions of fault-tolerance. For a given failure pattern, some processes might tolerate it while others might not. This contrasts with symmetric systems, where either all processes tolerate the failure scenario or none does. In this work, we investigate which tasks remain solvable in these additional scenarios.

Our analysis relies on the notion of \emph{depth} and its relationship to task solvability, introduced in Section~\ref{sec:depth-tasks}. It is known that no tasks are solvable for depth zero, and that only simple tasks  can be solved for depth one. In contrast, many core problems in distributed computing, such as reliable broadcast and consensus, require depth at least two. Since each failure scenario determines the depths of the correct processes, this characterization enables a simple and systematic study of task solvability under different fault patterns.

Specifically, if a fault pattern tolerated by an asymmetric quorum system leaves only processes with depth zero or one, then only a very restricted class of tasks can be solved, those requiring depth one. If, however, at least one process has depth two or higher, the system enters a regime in which more complex tasks become solvable, tasks requiring depth two or more. Even though a task requiring depth $d>2$ can't be solved if the maximum depth of a process is two, in all scenarios where it is solvable (i.e., there is at least one process with depth $d$), there will be at least a process with depth two (since a process with depth $d$ also has depth $d'$ for all $d'<d$). We refer to the scenario where the maximum depth of any process is one or less as the \emph{simple} scenario and to the latter as the \emph{complex} scenario. We study how the fault tolerance of asymmetric trust differs significantly between those two scenarios.

We show that the additional fault tolerance provided by asymmetric trust lies entirely within the simple scenario. In contrast, all fault patterns under which complex tasks remain solvable can also be tolerated by an appropriate symmetric quorum system.

\begin{example}
\label{ex:1}
Consider a system with five processes $\mathcal{P} = \{p_1, p_2, p_3, p_4, p_5\}$ and asymmetric quorum system $\mathbb{Q}$. This is a valid asymmetric quorum system as it satisfies the $B^3$ property. 
\[
\begin{aligned}
\mathcal{Q}_1 &= \bigl\{\{p_1, p_2\}, \{p_1, p_2, p_3, p_5\} \bigr\} \\
\mathcal{Q}_2 &= \bigl\{\{p_2, p_3, p_4\}, \{p_1, p_2, p_3, p_5\} \bigr\} \\
\mathbb{Q}:\quad
\mathcal{Q}_3 &= \bigl\{\{p_1, p_3, p_4\}, \{p_1, p_2, p_3, p_5\} \bigr\} \\
\mathcal{Q}_4 &= \bigl\{\{p_2, p_3, p_4, p_5\}, \{p_1, p_2, p_3, p_5\} \bigr\} \\
\mathcal{Q}_5 &= \bigl\{\{p_2, p_3, p_4, p_5\}, \{p_1, p_2, p_3, p_5\} \bigr\} \\
\end{aligned}
\]

The quorum system $\mathbb{Q}$ can tolerate more faults than any symmetric quorum system, as it does not satisfy the $Q^3$ property. For example, failure patterns $F=\{p_3, p_4, p_5\}$, $F=\{p_1\}$, $F=\{p_2\}$, $F=\{p_3\}$, $F=\{p_4\}$, or $F=\{p_5\}$ can be tolerated, something not possible for any symmetric quorum system. However, this extra expressive power lies in fault scenarios in which only tasks requiring depth 1 can be solved. When $F=\{p_3, p_4, p_5\}$, process $p_2$ has depth 0 and process $p_1$ has depth 1. Since the maximum depth of a process is 1, it is not possible to solve problems such as reliable broadcast or consensus. Note that all failure scenarios that allow tasks requiring depth two to be solved ($F=\{p_4\}$, $F=\{p_5\}$) could be handled with a 4-out-of-5 symmetric threshold quorum system, in fact, such symmetric quorum system would be able to guarantee solvability of problems for more  failure scenarios than $\mathbb{Q}$. We generalize this in Lemmas~\ref{lem:asym-collapse-d2}~and~\ref{lem:d2-compiler-valid}, where we show that, for any asymmetric quorum system, all fault patterns that allow tasks requiring depth two to be solved with asymmetric trust could also be handled with symmetric trust. 
    
\end{example}

More precisely, we show that for any asymmetric quorum system~$\mathbb{Q}$, there exists a symmetric quorum system~$\mathcal{S}$ such that, for every failure pattern tolerated by~$\mathbb{Q}$ in which complex problems can be solved,~$\mathcal{S}$ can also tolerate that pattern and support the solution of the same tasks. Consequently, when attention is restricted to scenarios in which complex tasks are solvable, symmetric and asymmetric trust provide identical fault-tolerance guarantees.

In Example~\ref{ex:1}, the system~$\mathbb{Q}$ tolerates the failure pattern $F={p_3, p_4, p_5}$. However, these failures cause all correct processes to have depth zero or one, and therefore tasks requiring higher depths can't be solved. 

We establish this result by developing a \emph{quorum compiler} that transforms any asymmetric quorum system into a symmetric quorum system. We show that, in every execution in which at least one process has depth at least two, the resulting symmetric quorum system satisfies the standard consistency and availability properties. This guarantees that it is a valid symmetric quorum system and that it can be used within symmetric algorithms to solve distributed tasks. 

We also show that there exist no compilers that guarantee that the symmetric quorum system generated satisfies the consistency and availability properties if the maximum depth of any process is one. Together, these results provide a tight characterization of the power and limitations of compilers with respect to depth.

\subsection{Compiler}
\label{sec:compiler-d2}

In this section we study the problem of translating trust assumptions from the asymmetric to the symmetric world. We first introduce the notion of a set of failures induced by a depth for a particular asymmetric quorum system (Definition~\ref{def:di-failures}). That is, a set of failures such that if any of those occur, the system will have one or more processes with the inducing depth.

\begin{definition}
    \label{def:di-failures}
    Given an asymmetric quorum system $\mathbb{Q}$, denote by $\mathcal{F}[d] \subseteq 2^{\mathcal{P}}$ the set of all failure patterns that allow at least one process to have depth $d$ or more. That is, a set $F\subseteq\mathcal{P}$ belongs to $\mathcal{F}[d]$ if and only if there is a process $p \in \mathcal{P}$ such that when the parties in $F$ are faulty, $\text{depth}(p)\geq d$.
\end{definition}

Having defined $\mathcal{F}[d]$, we define the notion of a quorum compiler parametrized by depth. This is a program that translates asymmetric quorum systems into symmetric counterparts while keeping the same fault-tolerance guaranties if there is at least one process with a certain depth.

\begin{definition}[Depth-parametrized quorum compiler]
    A quorum compiler with depth $d$, shortened as $\mathcal{C}[d]$, is a program that takes as input any asymmetric quorum $\mathbb{Q}$ that satisfies the $B^3$ property and produces as output a symmetric quorum system $\mathcal{S}$, such that for all failure patterns in $\mathcal{F}[d]$, the resulting symmetric quorum system $\mathcal{S}$ satisfies the following properties
    \begin{enumerate}
        \item $\forall\ Q_1, Q_2 \in \mathcal{S}, \forall F \in \mathcal{F}[d]:  Q_1\cap Q_2 \nsubseteq F$
        \item $\forall\ F \in \mathcal{F}[d], \exists Q \in \mathcal{S}: Q\cap F = \emptyset$
    \end{enumerate}
    That is, $\mathcal{S}$ satisfies the quorum consistency and availability properties when considering $\mathcal{F}[d]$ as the fail-prone system whose faults have to be tolerated. 
\end{definition}

We parametrize the compiler using depth since this is the best existing tool to characterize asymmetric trust. Note, however, that there could exist other parameterizations that allow even more expressive power at the time of characterizing processes in asymmetric trust. 

We now show a compiler construction that implements $\mathcal{C}[2]$. Before this, the best existing construction~\cite{DBLP:conf/opodis/Amores-SesarCKV25} implemented $\mathcal{C}[\infty]$. Afterwards, we show that it is not possible to construct a compiler that implements $\mathcal{C}[1]$. This shows that it is not possible to get a compiler with a lower depth parameter than the one presented here.   Algorithm~\ref{alg:compiler} presents the code to implement the compiler $\mathcal{DC}$.  

\begin{algo*}[h!]
\vbox{
\small
\begin{numbertabbing}\reset
  xxxx\=xxxx\=xxxx\=xxxx\=xxxx\=xxxx\=MMMMMMMMMMMMMMMMMMM\=\kill
  $\mathcal{S} \gets \emptyset$ // stores the symmetric quorum system \label{}\\
  \textbf{for each} $\mathcal{Q}_i \in \mathbb{Q}$ \textbf{do} \label{line:originating_quorum}\\
  \> \textbf{for each} $Q_i \in \mathcal{Q}_i$ \textbf{do }\label{}\\
  \>\> expand($Q_i$, 0, $\emptyset$)\label{line:expand}\\
  \textbf{return} $\mathcal{S}$ \label{}\\ \\

  \textbf{function} expand($Q$, $j$, \text{currentquorum}) \label{}\\
  \> \textbf{if} $j < |Q|$ \textbf{do} \label{}\\
  \>\> $p_k \gets Q[j]$ // process at index $j$ in $Q$ \label{}\\
  \>\> \textbf{for each } $Q' \in \mathcal{Q}_k\textbf{ do}$ \label{}\\
  \>\>\> expand($Q$, $j+1$, $\text{currentquorum} \cup Q'$) \label{}\\
  \> \textbf{else} \label{}\\
  \>\> $\mathcal{S} \gets \mathcal{S} \cup \text{currentquorum}$ \label{}
\end{numbertabbing}
}
\caption{Asymmetric to symmetric quorum compiler $\mathcal{DC}$}
\label{alg:compiler}
\end{algo*}

\begin{definition}[Compiler $\mathcal{DC}$]
\label{def:s}
    Given any $B^3$-satisfying asymmetric quorum system $\mathbb{Q} = [\mathcal{Q}_1, \dots, \mathcal{Q}_n]$, we define the compiler $\mathcal{DC}$ as follows. It takes as input $\mathbb{Q}$ and outputs a symmetric quorum system $\mathcal{S} = \mathcal{DC}(\mathbb{Q})$.
    
    The compiler transforms an asymmetric quorum system into a symmetric one by expanding each quorum. More precisely, for every quorum $Q \in \mathcal{Q}_i \in \mathbb{Q}$ in the local quorum systems of all processes, the compiler considers each process $p \in Q$ and replaces it with one of its local quorums from $\mathcal{Q}_p$. It enumerates all possible combinations of such choices (i.e., the Cartesian product), takes the union of the selected sets, and collects every resulting union into the output system $\mathcal{S}$.

    %Mathematically this can be formalized as follows: 

    %\mathcal{S} = \bigl\{ \bigcup \mathcal{Q}_1 \times \mathcal{Q}_2 \times \dots \mathcal{Q}_k \mid p_1, \dots, p_k \in Q, \forall Q \in \mathcal{Q}_i \forall p_i \in \mathcal{P} \bigr\}$.

\end{definition}
Intuitively, each original quorum is "expanded" by substituting every member with one of its local quorums, and the symmetric quorum system consists of all sets that can be formed this way.

Given a symmetric quorum $S \in \mathcal{S}$, we denote by $O(S)$ the original quorum that was “expanded” to produce $S$, i.e., the quorum $Q_i$ in Line~\ref{line:expand} of the algorithm. From $\mathcal{S}$, we can also define its associated canonical fail-prone system $\mathcal{N} = \{\mathcal{P} \setminus S \mid S \in \mathcal{S}\}$. Lemma~\ref{lem:asym-collapse-d2} shows that if $\mathbb{Q}$ satisfies $B^3$, then $\mathcal{S}$ satisfies $Q^3$. Lemma~\ref{lem:d2-compiler-valid} shows that if a process has depth at least 2, $\mathcal{S}$ satisfies quorum consistency and availability. Together, these results show that $\mathcal{C}$ produces a valid symmetric quorum system.

\begin{restatable}[]{lemma}{lone}
    \label{lem:asym-collapse-d2}
    Let $\mathbb{Q}$ be an asymmetric Byzantine quorum system among processes $\mathcal{P}$ with asymmetric fail-prone system $\mathbb{F}$ and let $\mathcal{S} = \mathcal{DC}(\mathbb{Q})$ be  the symmetric Byzantine quorum system obtained with compiler $\mathcal{DC}$. Let $\mathcal{N}$ be the canonical fail-prone system associated to $\mathcal{S}$. If $B^3(\mathbb{F})$ then $Q^3(\mathcal{N})$. 
\end{restatable}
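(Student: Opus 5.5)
The plan is to unfold the $Q^3$ condition for the canonical fail-prone system $\mathcal{N}$ into a statement about $\mathcal{S}$. Since every $N \in \mathcal{N}$ has the form $\mathcal{P}\setminus S$, the condition $\mathcal{P}\not\subseteq N_1\cup N_2\cup N_3$ is equivalent to $S_1\cap S_2\cap S_3\neq\emptyset$. So it suffices to show that any three compiled quorums $S_1,S_2,S_3\in\mathcal{S}$ have a common process.

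The structural fact to use is the one given by Definition~\ref{def:s}. If $O(S)=Q\in\mathcal{Q}_i$, then for every $p\in Q$ the set $S$ contains some quorum $Q'_p\in\mathcal{Q}_p$. Two consequences of $B^3(\mathbb{F})$ for canonical systems will be recorded first.
\begin{itemize}
\item \emph{Pairwise intersection.} Taking $F_{ij}=\emptyset$ in the consistency property of Definition~\ref{def:abqs}, any two quorums of any two processes intersect.
\item \emph{Single-process $Q^3$.} Taking $i=j=p$ in Definition~\ref{def:b3}, with $F_p=\mathcal{P}\setminus A$, $F'_p=\mathcal{P}\setminus B$ and $F_{pp}=\mathcal{P}\setminus C\in\mathcal{F}_p^*$, any three quorums $A,B,C\in\mathcal{Q}_p$ of the same process have a common element.
\end{itemize}
The goal is then to find a single \emph{pivot} process $p$ for which $S_1,S_2,S_3$ each contain a quorum of $p$. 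Single-process $Q^3$ then gives $S_1\cap S_2\cap S_3\supseteq A\cap B\cap C\neq\emptyset$.

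To find the pivot, I would exploit the two levels of the expansion. By pairwise intersection, $O(S_1)$ and $O(S_2)$ share a process $p$, so $S_1\supseteq A\in\mathcal{Q}_p$ and $S_2\supseteq B\in\mathcal{Q}_p$. The difficulty is $S_3$: it contains quorums only of processes in $O(S_3)$, which need not contain $p$. My first attempt would be a chaining argument. Pick $r\in A\cap O(S_3)$, which exists by pairwise intersection, so that $S_3\supseteq C\in\mathcal{Q}_r$ and $r\in S_1$. Then I would try to show $A\cap B\cap C\neq\emptyset$ by invoking $B^3$ for the pair $(p,r)$ with $F_p=\mathcal{P}\setminus B$, $F_r=\mathcal{P}\setminus C$, and a suitable $F_{pr}$ derived from $\mathcal{P}\setminus A$.

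The main obstacle is exactly this last step. $\mathcal{P}\setminus A$ lies in $\mathcal{F}_p^*$ but need not lie in $\mathcal{F}_r^*$, so $B^3$ does not apply directly. If this fails, I would instead choose the three second-level pivots more cleverly, for instance choosing $p\in O(S_1)\cap O(S_2)$ and $q\in O(S_1)\cap O(S_3)$. I would then look for a process in $S_1$ whose quorums are forced into all three sets, so that the whole argument reduces to single-process $Q^3$ or to a $B^3$ instance whose third set is a common subset of fail-prone sets of both processes involved. I expect the crux of the proof to be identifying the right pivot so that the fail-prone set used as $F_{ij}$ is foreseen by both processes.
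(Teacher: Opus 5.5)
Your reduction to showing $S_1\cap S_2\cap S_3\neq\emptyset$ is correct, and so are both consequences of $B^3$ that you record. However, the proposal stops at exactly the step that carries the proof, and your main route cannot work in general: a single pivot, i.e., a process with a quorum inside each of $S_1,S_2,S_3$, need not exist. For instance, let $L_1=\{p_1,p_2,p_3\}$, $L_2=\{p_1,p_4,p_5\}$ and $L_3=\{p_3,p_4,p_6\}$, and give each process one quorum: $\mathcal{Q}_1=\mathcal{Q}_2=\mathcal{Q}_5=\{L_1\}$, $\mathcal{Q}_3=\mathcal{Q}_6=\{L_2\}$, $\mathcal{Q}_4=\{L_3\}$. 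With canonical fail-prone systems and one quorum per process, $B^3$ reduces to pairwise intersection of the quorums, which holds here. Expanding $L_1,L_2,L_3$ gives $S_1=L_1\cup L_2$, $S_2=L_1\cup L_3$ and $S_3=L_2\cup L_3$. Their common intersection is $\{p_1,p_3,p_4\}$, which contains none of $L_1,L_2,L_3$. So no process has a quorum in all three sets, and single-process $Q^3$ is unavailable even though the lemma holds. Your chaining attempt has a similar problem: it aims at $A\cap B\cap C\neq\emptyset$ for individual asymmetric quorums, which is more than is needed. You correctly observe that it breaks because $\mathcal{P}\setminus A$ need not be foreseen by $r$.

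The missing idea is to take as the shared set $F_{ij}$ the complement of a whole \emph{compiled} quorum, not of an individual asymmetric quorum. Use your fallback choice $p\in O(S_1)\cap O(S_2)$ and $q\in O(S_1)\cap O(S_3)$. Then $S_1$ contains a quorum of $p$ and a quorum of $q$, so $N_1=\mathcal{P}\setminus S_1$ lies in a fail-prone set of $p$ and in one of $q$; that is, $N_1\in\mathcal{F}_p^*\cap\mathcal{F}_q^*$. Take $B\in\mathcal{Q}_p$ with $B\subseteq S_2$ and $C\in\mathcal{Q}_q$ with $C\subseteq S_3$. Applying $B^3$ to the pair $(p,q)$ with $F_p=\mathcal{P}\setminus B$, $F_q=\mathcal{P}\setminus C$ and $F_{pq}=N_1$ gives $S_1\cap B\cap C\neq\emptyset$, hence $S_1\cap S_2\cap S_3\neq\emptyset$. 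This is precisely the paper's argument; there the shared compiled quorum is $S_2$, with $p_i\in O(S_1)\cap O(S_2)$ and $p_j\in O(S_2)\cap O(S_3)$. Without this step your proposal does not yet establish the lemma.
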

\begin{proof}
    See Appendix~\ref{ap:proofs}.
\end{proof}

\begin{restatable}[]{lemma}{ltwo}
    \label{lem:d2-compiler-valid}
    The compiler $\mathcal{DC}$ implements $\mathcal{C}[2]$
    %Given an asymmetric quorum system $\mathbb{Q}$ and a set of processes $\mathcal{P}$, if there is at least one process with depth two, then $\mathcal{S} = \mathcal{DC}(\mathbb{Q})$ implements $\mathcal{C}[2]$. satisfies the consistency and availability properties
\end{restatable}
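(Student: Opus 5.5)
The plan is to fix an arbitrary failure pattern $F \in \mathcal{F}[2]$ and verify the two properties in the definition of $\mathcal{C}[2]$ directly for $\mathcal{S} = \mathcal{DC}(\mathbb{Q})$. By Definition~\ref{def:di-failures}, $F \in \mathcal{F}[2]$ means that some process $p$ has depth at least $2$ when $F$ is faulty. By Definition~\ref{def:depth}, $p$ then has a quorum $Q_p \in \mathcal{Q}_p$ all of whose members are correct and have depth at least $1$. In turn, every $q \in Q_p$ has a quorum $Q_q \in \mathcal{Q}_q$ consisting only of correct processes, so $Q_q \cap F = \emptyset$. Both properties will be derived from these two layers of depth.

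\emph{Availability.} Expand the original quorum $Q_p$ by replacing each $q \in Q_p$ with the all-correct quorum $Q_q$ chosen above. This is one of the combinations enumerated by Algorithm~\ref{alg:compiler}, so the set $S = \bigcup_{q \in Q_p} Q_q$ belongs to $\mathcal{S}$. Since every $Q_q$ avoids $F$, we get $S \cap F = \emptyset$, which is the required disjoint quorum.

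\emph{Consistency.} I first record a key observation: every process $x$ of depth at least $1$ \emph{foresees} $F$. Indeed, such an $x$ has a quorum $Q_x$ with $Q_x \cap F = \emptyset$, so $F \subseteq \mathcal{P}\setminus Q_x$. Since quorum systems are canonical, $\mathcal{P}\setminus Q_x \in \mathcal{F}_x$, and hence $F \in \mathcal{F}_x^*$.

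Now take arbitrary $S_1, S_2 \in \mathcal{S}$ with original quorums $O(S_1) \in \mathcal{Q}_i$ and $O(S_2) \in \mathcal{Q}_j$. Apply the asymmetric consistency property of Definition~\ref{def:abqs} with $F_{ip} = \emptyset$, which lies in $\mathcal{F}_i^* \cap \mathcal{F}_p^*$. This gives $O(S_1) \cap Q_p \neq \emptyset$, so we can pick a process $x$ in this intersection; similarly we pick $y \in O(S_2) \cap Q_p$. Both $x$ and $y$ lie in $Q_p$, so they have depth at least $1$ and, by the observation, $F \in \mathcal{F}_x^* \cap \mathcal{F}_y^*$. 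By construction of the compiler, $S_1$ contains some quorum $Q'_x \in \mathcal{Q}_x$, because $x \in O(S_1)$ was replaced by one of its quorums. Likewise $S_2$ contains some $Q'_y \in \mathcal{Q}_y$. Asymmetric consistency between $x$ and $y$ with $F_{xy} = F$ yields $Q'_x \cap Q'_y \not\subseteq F$, and therefore $S_1 \cap S_2 \not\subseteq F$.

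The main obstacle is the consistency step. The originating processes $i$ and $j$ of $S_1$ and $S_2$ need not foresee $F$ at all, so asymmetric consistency cannot be applied to them directly. The way around this is to use the empty fail-prone set to route both expanded quorums through the depth-$2$ process's quorum $Q_p$. This is exactly where depth $2$ rather than depth $1$ is needed: we need \emph{intermediaries} $x, y$ of depth at least $1$ that lie inside both $O(S_1)$ and $O(S_2)$'s neighbourhoods and whose expanded quorums are guaranteed to be included in $S_1$ and $S_2$. Some minor care is also needed for the edge cases $x = y$ and $i = j$, which the same argument covers.
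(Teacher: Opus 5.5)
Your proof is correct and follows essentially the same route as the paper. Availability comes from expanding the depth-$2$ process's quorum of depth-$\geq 1$ members, each replaced by a fault-free quorum of its own; consistency routes both expanded quorums through intermediaries $x \in O(S_1)\cap Q_p$ and $y \in O(S_2)\cap Q_p$ and applies asymmetric consistency with $F$ between them.

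The only differences are cosmetic. You apply asymmetric consistency directly with $F_{xy}=F$, where the paper cites Corollary~\ref{lem:d1-intersect}, which is itself an immediate consequence of that property. You also state explicitly the use of $\emptyset$ as a common fail-prone set and of canonicity to get $F\in\mathcal{F}_x^*$ for depth-$\geq 1$ processes, both of which the paper leaves implicit.
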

\begin{proof}
    See Appendix~\ref{ap:proofs}.
\end{proof}

\begin{restatable}[]{lemma}{lthree}
    Consider an execution with faulty processes $F$, a process $p_i$ such that $\op{depth(}p_i\op{)} \geq 1$ and any of its quorums $Q_i \in \mathcal{Q}_i$. For any other process $p_j$ with $\op{depth(}p_j\op{)} \geq 1, \exists K_j \in \mathcal{K}_j$ such that $K_j \subseteq Q_i \setminus F$.
    \label{lem:q-always-k}
\end{restatable}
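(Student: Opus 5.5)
The plan is to show directly that $K_j := Q_i \setminus F$ is itself a kernel of $p_j$. The argument reduces to one application of the consistency property of Definition~\ref{def:abqs}, once we know that $F$ is foreseen by both $p_i$ and $p_j$.

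First, I would translate the depth hypotheses into statements about fail-prone sets. By Definition~\ref{def:depth}, $\op{depth(}p_j\op{)} \geq 1$ means that $p_j$ has a quorum $Q_j^\ast \in \mathcal{Q}_j$ all of whose members are correct, so $Q_j^\ast \cap F = \emptyset$ and hence $F \subseteq \mathcal{P} \setminus Q_j^\ast$. Since we work with canonical quorum systems, $\mathcal{P} \setminus Q_j^\ast \in \mathcal{F}_j$, so $F \in \mathcal{F}_j^*$. The same argument applied to $p_i$, using some fully correct quorum $Q_i^\ast \in \mathcal{Q}_i$, gives $F \in \mathcal{F}_i^*$. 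Therefore $F \in \mathcal{F}_i^* \cap \mathcal{F}_j^*$. The quorum $Q_i^\ast$ witnessing $p_i$'s depth need not be the arbitrary $Q_i$ in the statement, but this does not matter, because consistency quantifies over all quorums of $p_i$.

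Next, I would fix an arbitrary quorum $Q \in \mathcal{Q}_j$. I would apply the consistency property of $\mathbb{Q}$ to the pair $Q_i \in \mathcal{Q}_i$ and $Q \in \mathcal{Q}_j$ with $F_{ij} = F$, which gives $Q_i \cap Q \nsubseteq F$. Hence there is a process in $Q_i \cap Q$ that is not in $F$, so $(Q_i \setminus F) \cap Q \neq \emptyset$. Since $Q$ was arbitrary, $Q_i \setminus F$ meets every quorum of $p_j$, so it belongs to $\mathcal{K}_j$. Taking $K_j = Q_i \setminus F$ then gives $K_j \subseteq Q_i \setminus F$ trivially.

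The only delicate step is justifying $F \in \mathcal{F}_i^* \cap \mathcal{F}_j^*$, and in particular using the canonical-system assumption stated in Section~\ref{sec:asymmetric}. If the canonicity assumption is dropped, I would instead invoke availability and the correspondence between fail-prone sets and quorum complements. The case $i = j$ needs no separate treatment, since consistency also applies when $i = j$.
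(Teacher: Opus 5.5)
Your proposal is correct and follows the paper's proof: depth at least one gives $F \in \mathcal{F}_i^* \cap \mathcal{F}_j^*$, and consistency with $F_{ij} = F$ then makes $Q_i \setminus F$ itself a kernel of $p_j$; you also rightly spell out the canonicity step (a fully correct quorum $Q$ gives $F \subseteq \mathcal{P} \setminus Q \in \mathcal{F}_i$), which the paper leaves implicit. The only thing to drop is your fallback for non-canonical systems: availability only goes from fail-prone sets to disjoint quorums, not conversely, and without canonicity a fully correct quorum does not force $F \in \mathcal{F}_i^*$, so that route would not work---but it is not needed under the paper's standing assumption.
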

\begin{proof}
    See Appendix~\ref{ap:proofs}.
\end{proof}

\begin{corollary}
    Consider an execution with faulty processes $F$, and two processes $p_i$, $p_j$ with depth at least one. For any pair of their quorums $Q_i \in \mathcal{Q}_i$, $Q_j \in \mathcal{Q}_j$, $Q_i \cap Q_j \nsubseteq F$. 
    \label{lem:d1-intersect}
\end{corollary}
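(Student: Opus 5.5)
The corollary follows directly from Lemma~\ref{lem:q-always-k}; the only additional ingredient is that a kernel meets every quorum of its owner. Fix an execution with faulty set $F$, two processes $p_i,p_j$ of depth at least one, and quorums $Q_i\in\mathcal{Q}_i$, $Q_j\in\mathcal{Q}_j$. Applying Lemma~\ref{lem:q-always-k} to $p_i$, its quorum $Q_i$, and the process $p_j$ gives a kernel $K_j\in\mathcal{K}_j$ with $K_j\subseteq Q_i\setminus F$.

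By the definition of a kernel, $K_j\cap Q\neq\emptyset$ for every $Q\in\mathcal{Q}_j$. In particular $K_j\cap Q_j\neq\emptyset$, so there is a process $p\in K_j\cap Q_j$. Since $K_j\subseteq Q_i\setminus F$, we have $p\in Q_i$ and $p\notin F$. Hence $p\in (Q_i\cap Q_j)\setminus F$, which shows $Q_i\cap Q_j\nsubseteq F$.

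The argument is symmetric in $i$ and $j$, and it also covers the case $i=j$. The only point to check is that Lemma~\ref{lem:q-always-k} is invoked with the correct roles: the quorum comes from $p_i$, and the kernel belongs to the other process $p_j$, whose quorum $Q_j$ we want to hit. All the substantive work, namely using depth $\geq 1$ together with the $B^3$ property to obtain a correct kernel inside $Q_i$, is already done in that lemma. No other step presents an obstacle.
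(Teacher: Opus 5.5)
Your proof is correct and follows the paper's own (very brief) argument: Lemma~\ref{lem:q-always-k} yields a kernel $K_j\subseteq Q_i\setminus F$ of correct processes, and this kernel must meet $Q_j$. Your remark about $i=j$ is also justified: Lemma~\ref{lem:q-always-k} is phrased for ``any other process,'' but its proof (asymmetric consistency applied to $F\in\mathcal{F}_i^*\cap\mathcal{F}_j^*$) never uses $i\neq j$.
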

%\begin{proof}
%    From Lemma~\ref{lem:q-always-k} it follows that given any quorum $Q_i \in \mathcal{Q}_i$ of process $p_i$, there is a set composed solely of correct processes that intersects all quorums of $p_j$. This proves the theorem. 
%\end{proof}

This result implies that all failure patterns contemplated by $\mathcal{F}[2]$ are tolerated by the symmetric quorum system. However, if no process has depth two or more, the compiler does not guarantee consistency or availability. 
For tasks requiring depth two or more, no asymmetric protocol could provide a solution in these cases anyway. This is the reason why every failure scenario that permits solving a task requiring depth at least two can also be tolerated by the symmetric quorum system produced by the compiler.

\subsection{Impossibility of compilers for depth one}
\label{sec:compilers-impossible}
In Section~\ref{sec:compiler-d2} a compiler from an asymmetric quorum system $\mathbb{Q}$ to a symmetric quorum system $\mathcal{S}$ with depth parameter two was presented, denoted as $\mathcal{C}[2]$. It maintains the same fault-tolerance guarantees for all failure patterns that leave at least one process with depth two. In this section we show that there cannot exist a compiler that implements $\mathcal{C}[1]$. This shows a tight bound on the existence of compilers parametrized by depth. Lemma~\ref{lem:d1-compiler-impossible} proves this result. 

\begin{restatable}[]{lemma}{lfour}
    \label{lem:d1-compiler-impossible}
    There exists no compiler that implements $\mathcal{C}[1]$.
\end{restatable}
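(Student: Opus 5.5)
To prove the statement, I would exhibit a single $B^3$-satisfying asymmetric quorum system $\mathbb{Q}$ whose induced family $\mathcal{F}[1]$ violates the $Q^3$ condition. Lemma~\ref{lem:canon} then finishes the argument. Any compiler implementing $\mathcal{C}[1]$ would, on input $\mathbb{Q}$, have to output a symmetric quorum system $\mathcal{S}$ satisfying consistency and availability with respect to the fail-prone system $\mathcal{F}[1]$. By Lemma~\ref{lem:canon}, such an $\mathcal{S}$ exists only if $Q^3(\mathcal{F}[1])$ holds. So finding three sets $F_1,F_2,F_3 \in \mathcal{F}[1]$ with $F_1\cup F_2\cup F_3=\mathcal{P}$ rules out every compiler at once. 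The argument is independent of the compiler's construction and uses only what the output must satisfy.

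The natural candidate is Example~\ref{ex:1}, with the failure patterns it already lists as tolerated. The first step is to check that $F_1=\{p_3,p_4,p_5\}\in\mathcal{F}[1]$. With these faulty, $p_1$ has the quorum $\{p_1,p_2\}$, which consists solely of correct processes, each of depth at least $0$. Hence $p_1$ has depth $1$, as the paper itself notes.

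The second step is to find two more sets in $\mathcal{F}[1]$ that cover $\{p_1,p_2\}$. Take $F_2=\{p_1\}$. Then $p_4$ has the quorum $\{p_2,p_3,p_4,p_5\}$ avoiding $p_1$, so $p_4$ has depth at least $1$. Take $F_3=\{p_2\}$. Then $p_3$ has the quorum $\{p_1,p_3,p_4\}$ avoiding $p_2$, so $p_3$ has depth at least $1$. The union $F_1\cup F_2\cup F_3$ is $\mathcal{P}$, so $Q^3(\mathcal{F}[1])$ fails.

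For completeness I would also confirm that the system satisfies $B^3$, as claimed in Example~\ref{ex:1}. I would do this by computing the canonical fail-prone sets $\mathcal{F}_i=\overline{\mathcal{Q}_i}$ and checking the triples directly, which is a routine finite check. If it is more convenient, a cleaner alternative is to instead design a minimal example: three pairwise disjoint groups where each group's failure still leaves some process with an all-correct quorum.

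The main obstacle is not the combinatorics but making the reduction airtight. Consistency in the definition of $\mathcal{C}[1]$ quantifies over all $F\in\mathcal{F}[1]$, not over $\mathcal{F}[1]^*$. I therefore need the necessity direction of Lemma~\ref{lem:canon} to apply with $\mathcal{F}[1]$ as the fail-prone system. I would reprove this direction directly rather than cite it, to avoid any mismatch. By availability there are quorums $Q_1,Q_2\in\mathcal{S}$ with $Q_1\cap F_1=\emptyset$ and $Q_2\cap F_2=\emptyset$. Then $Q_1\cap Q_2\subseteq \mathcal{P}\setminus(F_1\cup F_2)\subseteq F_3$. Since $F_3\in\mathcal{F}[1]$, this contradicts consistency.
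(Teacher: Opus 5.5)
Your proof is correct. Its main line is what the paper offers only as a closing remark (``another way of reaching this conclusion''), and you use a different witness.

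The paper's primary argument uses the minimal cyclic system $\mathcal{Q}_1=\{\{p_1,p_2\}\}$, $\mathcal{Q}_2=\{\{p_2,p_3\}\}$, $\mathcal{Q}_3=\{\{p_3,p_1\}\}$, in which every singleton failure leaves some process with depth one. It then argues that $\mathcal{S}$ must contain each of the three pairs: availability gives a quorum inside the complement of each singleton, and a further consistency/availability argument excludes singleton quorums. It concludes because $\{p_1,p_2\}\cap\{p_2,p_3\}=\{p_2\}\in\mathcal{F}[1]$.

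Your direct argument is that two available quorums for $F_1$ and $F_2$ must meet inside $F_3$. This shows that pinning down the quorums of $\mathcal{S}$ is unnecessary, and it isolates the actual obstruction, the failure of $Q^3(\mathcal{F}[1])$, without citing Lemma~\ref{lem:canon}.

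Reusing Example~\ref{ex:1} shows that the paper's motivating example already witnesses the impossibility. The cost is a less immediate $B^3$ check than for the cyclic system, and that check does succeed. In Example~\ref{ex:1}, $p_3$ lies only in fail-prone sets of $p_1$, and $p_2$ only in fail-prone sets of $p_3$. Every element of $F_i\cup F_j\cup F_{ij}$ lies in a fail-prone set of $p_i$ or $p_j$, so covering $\mathcal{P}$ would force $\{i,j\}=\{1,3\}$. That leaves $p_1$ uncovered, since it lies in no fail-prone set of either.

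Finally, your concern about $\mathcal{F}[1]$ versus $\mathcal{F}[1]^*$ is harmless, because $\mathcal{F}[1]$ is downward closed.
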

\begin{proof}
    See Appendix~\ref{ap:proofs}.
\end{proof}

As discussed in Section~\ref{sec:compiler-d2}, our results do not rule out every possible characterization of compilers using other concepts rather than depth. Depth is the most expressive concept for asymmetric quorum systems; nonetheless, we leave a complete characterization of compilers as future work.

\section{Impact on the known-quorums setting}
\label{sec:impact-known}

In the previous section, we showed that for any asymmetric quorum system $\mathbb{Q}$, there exists a symmetric quorum system $\mathcal{S}$ that tolerates the same failures in scenarios where tasks requiring depth two or more are solvable. For fault scenarios tolerated by $\mathbb{Q}$ but not $\mathcal{S}$, only tasks requiring depth one can be solved. This existential result applies to both the known and unknown quorums settings, though its practical application differs between the two.

In the known quorums setting, all processes have full knowledge of $\mathbb{Q}$, allowing each to locally compute $\mathcal{S} = \mathcal{C}(\mathbb{Q})$. They can then execute any symmetric protocol $\mathcal{A}$ with $\mathcal{S}$ to solve tasks, avoiding more complex asymmetric protocols. For tasks requiring depth two or more, every execution of a symmetric protocol with $\mathcal{S}$ allows each correct process to obtain a correct output from the execution. Processes with depths zero and one benefit most, since an asymmetric protocol for depth two would not guarantee a solution for them. Thus, using the compiled symmetric quorum system can offer stronger guarantees than asymmetric protocols. The cost of building and using this system is discussed in Section~\ref{sec:compiler-cost}.

\subsection{Cost of using the compiler}
\label{sec:compiler-cost}

A quorum system can in principle describe the power set of all processes and thus be exponentially large in $n$. This mean that in some scenarios, checking if a certain message has been received from a quorum could already be computationally expensive.
Thus explicitly executing the compiler in Algorithm~\ref{alg:compiler} to construct the symmetric quorum system may be impractical, as it incurs quadratic complexity in the size of the original asymmetric system.

Fortunately, checking whether a set contains a quorum can be done efficiently without explicitly constructing the full system. An on-the-fly procedure can determine if a set of received messages contains a quorum produced by the compiler, incurring only an additional $O(n)$ overhead compared to checking quorum inclusion in the original asymmetric system.
We note that this linear computational overhead is unavoidable in a compiled system since the original system may rely critically on the assumptions of one particular process (e.g. the only process with a certain depth), and the compiler must work regardless of which processes has maximal depth.\footnote{If the compiler is allowed to guess a process with maximal depth, a trivial solution is to simply output the quorums of that process.}

Recall that the compiler constructs symmetric quorums by expanding a single asymmetric quorum: starting from a quorum of one process, it selects, for each member, one of that member’s own quorums. A set of received messages contains a symmetric quorum if it contains this two-level structure. Suppose a process has received messages from a set $M$. The check proceeds in two steps:
\begin{description}
    \item[Filter parties:] For each party $p \in \mathcal{P}$, check whether $p$ has a quorum fully contained in $M$. Parties that fail this check cannot belong to any symmetric quorum contained in $M$. Let $M' \subseteq M$ denote the parties that pass this check.
    \item[Identify originating quorum:] Check whether there exists a party $p \in \mathcal{P}$ with a quorum $Q \in \mathcal{Q}_i$, such that $Q \subseteq M'$. If so, $M$ contains a quorum generated by the compiler, which resulted from expanding asymmetric $Q$.
\end{description}

Each step requires checking quorum inclusion for every party. Hence, the total cost of determining whether a symmetric quorum exists is at most $2n$ quorum checks in the original asymmetric system.

\section{Impact in the unknown quorums setting}
\label{sec:impact-unknown}
In the unknown quorums setting, each process $p_i$ does not know the complete asymmetric quorum system $\mathbb{Q}$, only their own trust assumptions $\mathcal{Q}_i$. It is possible for processes to exchange trust assumptions between them; however, since we are in the Byzantine setting, processes could lie or send conflicting information about their own trust assumptions. Additionally, processes can report local quorums that cause the $B^3$ property to break. Combined, this makes it very difficult for processes to learn or agree on the same asymmetric quorum system $\mathbb{Q}$. Consequently, using the compiler to produce a symmetric quorum system is not immediately possible. Lemmas~\ref{lem:asym-collapse-d2}~and~\ref{lem:d2-compiler-valid} still guarantee the existence of a symmetric quorum system with the same fault tolerance for cases where tasks requiring depth two or more can be solved, however, constructing it is not straightforward. We leave as an open question whether it is possible to apply ideas like the ones presented in this work also in the unknown-quorums setting.

\section{Related work}
\label{sec:relatedwork}

Thresholds and symmetric quorum systems are fundamental tools in distributed computing, widely used to guarantee correctness in protocols~\cite{DBLP:journals/siamcomp/NaorW98,DBLP:journals/dc/MalkhiR98,DBLP:books/daglib/0017536,DBLP:books/daglib/0025983}. They are also core components of many practical systems, including cloud platforms~\cite{DBLP:journals/sigops/LakshmanM10, DBLP:conf/usenix/HuntKJR10} and cryptocurrencies~\cite{DBLP:conf/sosp/GiladHMVZ17, DBLP:journals/corr/abs-1807-04938, buterin2018ethereum2}. A key limitation is that all participants must rely on the same quorum system, which limits the expression of heterogeneous trust assumptions.

The asymmetric trust model, introduced by Damg{\aa}rd~\etal~\cite{DBLP:conf/asiacrypt/DamgardDFN07} and further developed by Alpos~\etal~\cite{DBLP:journals/dc/AlposCTZ24}, generalizes the symmetric paradigm by allowing participants to make independent trust choices. These choices can reflect social relationships or external information unavailable to the protocol. In this model, each participant defines its own quorum system, operating under individualized trust assumptions. Amores-Sesar~\etal~\cite{DBLP:conf/podc/Amores-SesarCVZ25} show that algorithms designed for symmetric quorum systems cannot, in general, be applied directly to the asymmetric setting, as key quorum properties may no longer hold. This motivates adapting symmetric protocols and redefining classical correctness properties for asymmetric trust.

A growing body of work has focused on designing primitives and protocols specifically for the asymmetric model. Alpos~\etal~\cite{DBLP:journals/dc/AlposCTZ24} introduce asymmetric variants of fundamental building blocks, including reliable broadcast, binary consensus, and common coin. Building on these primitives, Amores-Sesar~\etal~\cite{DBLP:conf/podc/Amores-SesarCVZ25} develop a DAG-based consensus protocol tailored to asymmetric trust. Sheff~\etal~\cite{DBLP:conf/opodis/SheffWRM20} propose a variant of Paxos that incorporates heterogeneous trust assumptions. Losa~\etal~\cite{DBLP:conf/wdag/LosaGM19} propose an alternative modeling approach that replaces fail-prone sets with strengthened quorum definitions, requiring each quorum to contain a quorum for each of its members. Li~\etal~\cite{DBLP:conf/wdag/LiCL23} extend this framework by identifying quorum properties necessary or sufficient to solve fundamental asymmetric problems such as consensus. Finally, Amores-Sesar~\etal~\cite{DBLP:conf/opodis/Amores-SesarCKV25} refine these results, showing that reliable broadcast and consensus can be solved under significantly weaker assumptions than previously thought.

Asymmetric consensus protocols have also been deployed in blockchain systems, most notably the XRP Ledger\footnote{\url{https://xrpl.org}}~\cite{DBLP:journals/corr/abs-1802-07242}
 and Stellar\footnote{\url{https://stellar.org}}~\cite{mazieres2015stellar,DBLP:conf/sosp/LokhavaLMHBGJMM19}. In the XRP Ledger, each participant specifies its trust assumptions by listing nodes whose votes it considers~\cite{DBLP:journals/corr/abs-1802-07242,DBLP:conf/opodis/Amores-SesarCM20}. Stellar follows a similar approach, with each participant maintaining a set of trusted nodes and waiting for a sufficient majority to agree on a transaction before finalizing it~\cite{mazieres2015stellar,DBLP:conf/wdag/LosaGM19,DBLP:conf/sosp/LokhavaLMHBGJMM19}.

Translating problems between computational models has a long history in distributed computing~\cite{DBLP:journals/ipl/GoldreichP90,DBLP:conf/eurocrypt/DeligiosE24, DBLP:journals/iacr/LossM18}. In the context of asymmetric trust, Senn~and~Cachin~\cite{DBLP:conf/papoc/SennC25} were the first to study quorum compilers, focusing on transformations between asymmetric and symmetric trust in the crash-fault models. In the Byzantine setting, Amores-Sesar~\etal~\cite{DBLP:conf/podc/Amores-SesarCVZ25} proposed a quorum compiler that transforms asymmetric quorum systems into symmetric ones, though the resulting system guarantees correctness only under very restrictive operating conditions, namely, under the existence of processes with infinite depth.

\section{Conclusion and future work}
\label{sec:conclusions}

This paper investigated task solvability in failure scenarios that are uniquely tolerated by asymmetric trust, through the lens of the depth hierarchy. Our main result shows that asymmetric trust does not allow to solve tasks requiring depths at least two, such as reliable broadcast and consensus, in more failure scenarios than symmetric trust does. For any asymmetric quorum system, every failure scenario that permits solving a task requiring depth at least two can also be tolerated by a suitably constructed Byzantine symmetric quorum system. In the remaining failure scenarios, which are uniquely tolerable by asymmetric trust, it would not be possible to solve such tasks anyway, since all processes have depth at most one. We establish this by presenting a compiler that transforms asymmetric quorums into symmetric ones while preserving fault tolerance.

In contrast, depth one tasks emerge as the primary beneficiaries of asymmetric trust. They are the only tasks that can be solved in failure scenarios that are uniquely supported by asymmetric quorum systems, highlighting a precise boundary in the hierarchy. Consistent broadcast, one of these tasks, has been proposed and deployed as a consensusless building block in several systems~\cite{DBLP:conf/aft/BaudetDS20, DBLP:conf/wdag/AlposDMSZ25, DBLP:conf/ccs/BlackshearCDKKL24}, demonstrating that, despite its simplicity, it remains a useful component. Incorporating it into asymmetric systems would enable them to fully exploit the additional fault tolerance that asymmetric quorum systems provide.

Our results show that, for many tasks, equivalent symmetric trust assumptions exist; however, constructing and deploying them efficiently remains an open question. 
In the known-quorums setting, while the compiler is optimal in terms of depth, it introduces a linear computational overhead which may be relevant in complex systems. One could hope to eliminate this overhead using a natively asymmetric protocol, for which the problem of achieving optimal depth remains open.
In the unknown-quorums setting, constructing a symmetric quorum system from an asymmetric one remains open.

\section*{Acknowledgments}

This work was supported by the Swiss National Science Foundation (SNSF)
under grant agreement Nr\@.~219403 (Emerging Consensus), by the Initiative for
Cryptocurrencies and Contracts (IC3), by the Cryptographic Foundations for
Digital Society, CryptoDigi, DFF Research Project 2, Grant ID
10.46540/3103-00077B, and by the European Union, ERC2023-StG-101116713. Views and opinions expressed are those of the author(s) 
only and do not necessarily reflect those of the European Union. Neither the European Union nor the granting authority can be held responsible for them.

\bibliographystyle{plainurl}% the mandatory bibstyle
%\clearpage
\bibliography{references, dblpbibtex}

\newpage
\appendix
\section{Proofs}
\label{ap:proofs}

\lone*
\begin{proof}
    We will proceed by contradiction. Assume that $\mathcal{N}$ does not satisfy the $Q^3$-condition. Therefore, there exist $N_1, N_2, N_3 \in \mathcal{N}$ such that $N_1 \cup N_2 \cup N_3 = \mathcal{P}$. We will denote by $S_1 = \mathcal{P}\setminus N_1, S_2= \mathcal{P}\setminus N_2, S_3= \mathcal{P}\setminus N_3$ the corresponding  canonical quorums.

    Denote by also $Q_1 = O(S_1), \ Q_2 = O(S_2)$, and $Q_3 = O(S_3)$. 
    By the asymmetric quorum consistency property, there are processes $p_i, p_j$ such that $p_i \in Q_1 \cap Q_2$ and $p_j \in Q_2 \cap Q_3$. 

    By the definition of $\mathcal{S}$, there exists a quorum $Q_i \in \mathcal{Q}_i$ such that $Q_i \subseteq S_1$ and there is another quorum $Q_i' \in \mathcal{Q}_i$ such that $Q_i' \subseteq S_2$. Denote by $F_i = \mathcal{P} \setminus Q_i$ and $F_i'= \mathcal{P} \setminus Q_i'$ the canonical asymmetric fail-prone sets associated to $Q_i, Q_i'$. Since $F_i = \mathcal{P} \setminus Q_i$, $Q_i \subseteq S_1$, and $N_1 = \mathcal{P} \setminus S_1$, it follows that $N_1 \subseteq F_i$. Reasoning the same logic we obtain that $N_2 \subseteq F_i$. Therefore, $N_1 \in \mathcal{F}_i$ and $N_2 \in \mathcal{F}_i$.

     By the definition of $\mathcal{S}$, there is another quorum $Q_j \in \mathcal{Q}_j$ such that $Q_j \subseteq S_2$ and there is a quorum $Q_j' \in \mathcal{Q}_j$ such that $Q_j' \subseteq S_3$. Denote by $F_j = \mathcal{P} \setminus Q_j$ and $F_j'= \mathcal{P} \setminus Q_j'$ the canonical asymmetric fail-prone sets associated to $Q_j, Q_j'$. Since $F_j = \mathcal{P} \setminus Q_j$, $Q_j \subseteq S_2$, and $N_2 = \mathcal{P} \setminus S_2$, it follows that $N_2 \subseteq F_j$. Applying the same reasoning we get $N_3 \subseteq F_j$. Therefore, $N_2 \in \mathcal{F}_j$ and $N_3 \in \mathcal{F}_j$.

    This constitutes the contradiction since $p_i$ with fail-prone sets $N_1$, $N_2$ and $p_j$ with fail-prone sets $N_2$, $N_3$ violate the $B^3$-condition in $\mathbb{Q}$ as $N_1 \cup N_3 \cup (N_2 \cap N_2) = N_1 \cup N_3 \cup N_2 = \mathcal{P}$.
\end{proof}

\ltwo*
\begin{proof}

    We want to show that the compiler $\mathcal{DC}$ implements $\mathcal{C}[2]$. Consider a set of processes $\mathcal{P}$ and any $B^3$-satisfying asymmetric quorum system $\mathbb{Q}$ and its associated canonical fail-prone system $\mathbb{F}$. To prove the lemma we must show that $\mathcal{S} = \mathcal{DC}(\mathbb{Q})$ satisfies the following properties

    \begin{enumerate}
        \item $\forall Q_1, Q_2 \in \mathcal{S}, \forall F \in \mathcal{F}[2]:  Q_1\cap Q_2 \nsubseteq F$
        \item $\forall F \in \mathcal{F}[2], \exists Q \in \mathcal{S}: Q\cap F = \emptyset$
    \end{enumerate}

    We start by proving the first property. Consider any fault scenario $F \in \mathcal{F}[2]$. Since $F \in \mathcal{F}[2]$, there is at least one process $p_i \in \mathcal{P}$ such that $\text{depth}(p_i)\geq 2$. Thus, there is at least one quorum $Q_i \in \mathcal{Q}_i$ such that all its members have depth at least one, that is, for each $p_{i'} \in Q_i$, $F \in \mathcal{F}_{i'}^*$.
    
     Consider now any two quorums $Q_1, Q_2 \in \mathcal{S}$. We know that $O(Q_1) \in \mathcal{Q}_j$ for $\mathcal{Q}_j \in \mathbb{Q}$ for some process $p_j$ and that $O(Q_2) \in \mathcal{Q}_k$ for $\mathcal{Q}_k \in \mathbb{Q}$ for some process $p_k$.  By the quorum consistency property of asymmetric quorum systems, we know that $Q_i \cap Q_j \neq \emptyset$. Let us denote with $p_m$ any process belonging to $Q_i \cap Q_j$. Similarly, we know that $Q_i \cap Q_k \neq \emptyset$. Let us denote with $p_n$ any process belonging to $Q_i \cap Q_k$. Let $A_m \in \mathcal{Q}_m$ be the quorum of $p_m$ contained within $Q_1$, that is, $A_m \subseteq Q_1$. Similarly, let $B_n \in \mathcal{Q}_n$ be the quorum of $p_n$ contained within $Q_2$. Since $p_m$ and $p_n$ have depth 1, from Corollary~\ref{lem:d1-intersect} it follows that $A_m \cap B_n \nsubseteq F$. Therefore, $Q_1 \cap Q_2 \nsubseteq F$. This proves the first property. 

     To prove the second property, consider again the party $p_i$ and its quorum $Q_i \in \mathcal{Q}_i$ where all processes have depth at least one. For each process $p_j$ in $Q_i$, it follows that $F \in \mathcal{F}_j^*$. This implies that there is a quorum $Q_j \in \mathcal{Q}_j$ such that $F \subseteq \mathcal{P} \setminus Q_j$. Let $S' \in \mathcal{S}$ be the symmetric quorum where each $p_j \in Q_i$ is replaced by the quorum $Q_j \in \mathcal{Q}_j$ such that $F \cap Q_j  = \emptyset$. Since $S'$ is the union of such quorums, and for each quorum its intersection with $F$ is empty, it follows that $S'\cap F=\emptyset$, which proves the second property. 
\end{proof}

\lthree*
\begin{proof}
    Since $p_i$ and $p_j$ both have depth at least 1, it holds $F \in \mathcal{F}_i^*$ and $F \in \mathcal{F}_j^*$. This implies $F \in \mathcal{F}_i^* \cap \mathcal{F}_j^*$. Then, the set $Q_i \setminus F$ intersects every quorum of $p_j$ by the quorum consistency property, and therefore contains a kernel for $p_j$.
\end{proof}

\lfour*
\begin{proof}
    Suppose there exists a compiler $\mathcal{C}$ that implements $\mathcal{C}[1]$, meaning that it is able to produce symmetric quorum systems $\mathcal{S}$ that satisfy the following properties

    \begin{enumerate}
        \item $\forall\ Q_1, Q_2 \in \mathcal{S}, \forall F \in \mathcal{F}[1]:  Q_1\cap Q_2 \nsubseteq F$
        \item $\forall\ F \in \mathcal{F}[1], \exists Q \in \mathcal{S}: Q\cap F = \emptyset$
    \end{enumerate}

    Consider the system with three processes $\mathcal{P} = \{p_1, p_2, p_3\}$ and asymmetric quorum system $\mathbb{Q} : $

    $\mathcal{Q}_1 = \bigl\{\{p_1, p_2\}\big\}$
    
    $\mathcal{Q}_2 = \bigl\{\{p_2, p_3\}\big\}$
    
    $\mathcal{Q}_3 = \bigl\{\{p_3, p_1\}\big\}$

    Note that this is a valid asymmetric quorum system since the canonical fail-prone system associated to it satisfies the $B^3$ property.

    We will show that there cannot exist a compiler $\mathcal{C}$ that produces a symmetric quorum system that satisfies the aforementioend properties. We proceed by contradiction. 
    
    Suppose that there is a compiler $\mathcal{C}$ that produces a symmetric quorum system that satisfies both properties. If the input is $\mathbb{Q}$, the compiler should produce the same output $\mathcal{S}$ regardless of the classification of processes as faulty or correct. 

    First, consider the case where $p_3$ is faulty. From the example quorum system it follows that $p_1$ has depth one ($p_1, p_2$ are correct) and $p_2$ has depth zero.
    
    Since $\mathcal{S} = \mathcal{C}(\mathbb{Q})$ should satisfy both properties, from the second one it is known that there must be a quorum $Q_1 \in \mathcal{S}$ such that $Q_1 \subseteq \{p_1, p_2\}$. 
    
    We first show that $|Q_1| > 1$. Suppose that $Q_1$ only has one element. Without loss of generality assume that it is $p_1$. By the first property it follows that all other quorums in $\mathcal{S}$ would need to contain $p_1$ (since $\emptyset \in \mathcal{F}[1]$). However, the failure scenario $F=\{p_1\}$ is also contained in $\mathcal{F}[1]$ ($p_2$ has depth one), so $\mathcal{S}$ would still need to satisfy both properties. But this would mean that if $p_1$ is faulty, $\forall Q\in \mathcal{S}, Q\cap F\neq \emptyset$, which breaks the second property. Therefore, $Q_1 = \{p_1, p_2\}$. 

    Using the same analysis for the cases when $p_1$ is faulty and when $p_2$ is faulty, it follows that $\mathcal{S}$ should also contain quorums $Q_2 = \{p_2, p_3\}$ and $Q_3 = \{p_1, p_3\}$. 

    Therefore, $\{Q_1, Q_2, Q_3\} \subseteq \mathcal{S}$. However, $\mathcal{S}$ cannot contain all these quorums, as this leads to breaking the properties required for the compiler. For example, $Q_1\cap Q_2 = \{p_1, p_2\} \cap \{p_2, p_3\} = \{p_2\}$, but $\{p_2\} \in \mathcal{F}[1]$, so the first property would be broken.

    Another way of reaching this conclusion is by noting that for the asymmetric quorum system $\mathbb{Q}$, $\mathcal{F}[1]$ does not satisfy the $Q^3$ property, which by Lemma~\ref{lem:canon} implies that no quorum system exists for it. 

    Therefore, since a compiler that implements $\mathcal{C}[1]$ must work for all asymmetric quorum systems and for all failure scenarios contemplated by $\mathcal{F}[1]$, we have shown that such a compiler cannot exist. Therefore, $\mathcal{C}[2]$ is the compiler with the lowest depth parameter that can be implemented. 
\end{proof}

\end{document}